\newcommand{\bea}{\begin{eqnarray}}
\newcommand{\eea}{\end{eqnarray}}
\def\bi{\begin{itemize}}
\def\ei{\end{itemize}}
\def\bc{\begin{center}}
\def\ec{\end{center}}
\def\C{\hbox{$\mit I$\kern-.7em$\mit C$}}
\def\R{\hbox{$\mit I$\kern-.6em$\mit R$}}
\newcommand{\one}{\mbox{$1 \hspace{-1.0mm}  {\bf l}$}}
\def\tr{\mathrm{tr}}
\def\ns{\mathcal{NS}}
\def\l{\mathcal{L}}
\def\q{\mathcal{Q}}
\def\P{\textbf{P}}
\newtheorem{theorem}{Theorem}
\newtheorem{proposition}[theorem]{Proposition}
\begin{document}

\title{Simple conditions constraining the set of quantum correlations}
\author{Julio I. de Vicente} \email{jdvicent@math.uc3m.es}
\affiliation{Departamento de Matem\'aticas, Universidad Carlos III de
Madrid, Avda. de la Universidad 30, E-28911, Legan\'es (Madrid), Spain}

\begin{abstract}

The characterization of the set of quantum correlations in Bell scenarios is a problem of paramount importance for both the foundations of quantum mechanics and quantum information processing in the device-independent scenario. However, a clear-cut (physical or mathematical) characterization of this set remains elusive and many of its properties are still unknown. We provide here simple and general analytical conditions that are necessary for an arbitrary bipartite behaviour to be quantum. Although the conditions are not sufficient, we illustrate the strength and non-triviality of these conditions with a few examples. Moreover, we provide several applications of this result: we prove a quantitative separation of the quantum set from extremal nonlocal no-signaling behaviours in several general scenarios, we provide a relation to obtain Tsirelson bounds for arbitrary Bell inequalities and a construction of Bell expressions whose maximal quantum value is attained by a maximally entangled state of any given dimension.
\end{abstract}

\pacs{03.65.Ud, 03.67.-a}

\maketitle

\section{Introduction}

Bell's theorem shows that the correlations exhibited by quantum mechanical systems go beyond what is achievable by any local realistic theory (i.\ e.\ any local hidden variable model) \cite{bell}. This result has deep implications and still spurs several subfields of research \cite{review}. On the one hand, from a foundational perspective, it has been questioned what physical principle in nature could then be responsible for giving rise to precisely the particular set of correlations allowed by quantum mechanics. Since it has been shown that this set is strictly smaller than what could be achieved by just imposing the no-signaling principle (i.\ e.\ the impossibility of instantaneous propagation of information) \cite{tsins,PR}, several works have considered more restrictive physically-motivated axioms \cite{ic,ml,lo}. Although these approaches rule out several subsets of no-signaling correlations, no definite answer has yet been found as there still exist supra-quantum correlations compatible with these principles \cite{almostquantum}. On the other hand, from a more practical point of view and in the context of quantum information theory, it has been realized that quantum nonlocality can be regarded as a resource for device-independent quantum information processing (DIQIP). The tasks that can be carried out in this way include key distribution for cryptography \cite{qkd}, randomness generation \cite{re} and dimensionality certification (see e.\ g.\ \cite{dim,diment}).

In order to understand which physical principles constrain the set of quantum correlations and to elucidate what are the ultimate limitations behind DIQIP protocols, a fundamental question arises: can an efficient mathematical description of this set be found? It turns out that this is a highly nontrivial question. So far, the only systematic and general form to bound this set is given by the NPA hierarchy \cite{NPA}. This provides a hierarchy of semi-definite programs that approximate the quantum set from the outside. 
Although this constitutes an extremely powerful tool, it has been mainly exploited numerically. Thus, many properties of the quantum set remain unknown, since verifying general statements cannot be usually approached by numerical means. It would be therefore desirable to have simple analytical conditions constraining the set of quantum correlations. These could be moreover used to exclude some general subsets of no-signaling correlations from the quantum set based on their analytical properties or to bound the maximal efficiency DIQIP tasks with a given structure can attain
. The aim of this article is to fill this gap: I provide simple general analytical conditions any quantum behaviour should satisfy which rely on standard matrix analysis. Although these conditions emerge from the first step of the NPA hierarchy and are, therefore, only necessary, I will consider examples showing their strength and non-triviality. Furthermore, I will provide several different applications of this result: a proof of the separation between the quantum set and extremal nonlocal no-signaling correlations (a question recently raised in \cite{ravi}), a systematic way to obtain quantum bounds on arbitrary Bell inequalities (generalizing the recent result \cite{epping}) and the possibility to do robust self-testing of bipartite maximally entangled states by building Bell inequalities that are maximally violated by them.

\section{Framework and main result}

We will consider the standard bipartite $(m_Am_Bd_Ad_B)$ Bell scenario \cite{review} in which two parties $A$ and $B$ that could have interacted in the past remain now uncommunicated. Party $A$ ($B$) can freely choose questions from a finite alphabet $\mathcal{X}=\{1,2,\ldots,m_A\}$ ($\mathcal{Y}=\{1,2,\ldots,m_B\}$). Given any of these inputs $x\in\mathcal{X}$ and $y\in\mathcal{Y}$, each party can obtain an outcome $a\in\mathcal{A}=\{1,2,\ldots,d_A\}$ and $b\in\mathcal{B}=\{1,2,\ldots,d_B\}$, where the output alphabets are also finite. The central object here, referred to as behaviour, is the joint conditional probability distribution of obtaining the outputs $(a,b)$ given the choice of inputs $(x,y)$, $P(ab|xy)$ (for which we will use the shorthand $\P$). This list of $d_Ad_Bm_Am_B$ numbers must fulfill $P(ab|xy)\geq0$ $\forall a,b,x,y$ and $\sum_{a,b}P(ab|xy)=1$ $\forall x,y$. Moreover, since communication among the parties is not possible during the choice of input and recording of the output, the marginal of each party must be independent of the other's action, $P(a|x)=\sum_bP(ab|xy)=\sum_bP(ab|xy')$ $\forall a,x,y\neq y'$ and $P(b|y)=\sum_aP(ab|xy)=\sum_aP(ab|x'y)$ $\forall b,y,x\neq x'$. The set of all behaviours satisfying these no-signalling conditions will be denoted by $\mathcal{NS}$. Particular elements of this set are the $d_A^{m_A}d_B^{m_B}$ different local deterministic behaviours (LDBs) $D_i(ab|xy)=\delta_{a,f_i(x)}\delta_{b,g_i(y)}$ in which for every party a unique output occurs with probability 1 for every choice of input. The convex hull of these behaviours gives rise to the local set $\mathcal{L}$ \cite{fine}. On the other hand, the quantum set $\mathcal{Q}$ is given by all behaviours that can be obtained by performing measurements on bipartite quantum states of unrestricted dimension $\rho_{AB}$, i.\ e.\ $P(ab|xy)=\tr(\rho_{AB}E_a^x\otimes F_b^y)$ for some projectors $\{E_a^x,F_b^y\}$ such that $\sum_aE_a^x$ and $\sum_bF_b^y$ equal the identity in each party's Hilbert space $\forall x,y$. The crucial observation mentioned in the introduction is that $\l\subsetneq\q\subsetneq\ns$. Although $\q$ is a convex set, it is in general very hard to decide from the definition whether a given behaviour is in $\q$ or not. On the contrary, $\l$ and $\ns$ are both convex polytopes with vertices given by the LDBs in the first case, to which we have to add some non-local vertices in the second case. Following the standard notation, we will refer to these extremal non-local behaviours as Popescu-Rohrlich (PR) boxes.

In order to present our results, we will use some further notation.
We will arrange every $\P\in\ns$ to form the $m_Ad_A\times m_Bd_B=n_A\times n_B$ real matrix
\begin{equation}\label{Pmatrix}
P=\sum_{abxy}P(ab|xy)|xa\rangle\langle yb|,
\end{equation}
where in the standard notation of quantum mechanics $|xa\rangle=|x\rangle\otimes|a\rangle$ and $\{|x\rangle\}$ denotes the computational basis of $\mathbb{R}^{m_A}$ and similarly for the other alphabet elements. Thus, $P$ can be partitioned as a block matrix
$$P=\left(
      \begin{array}{ccc}
        P_{11} & \cdots & P_{1m_B} \\
        \vdots & \ddots & \vdots \\
        P_{m_A1} & \cdots & P_{m_Am_B} \\
      \end{array}
    \right)\in\mathbb{R}^{m_Ad_A\times m_Bd_B}=\mathbb{R}^{n_A\times n_B}$$
with blocks
$$P_{xy}=\left(
                           \begin{array}{ccc}
                             P(11|xy) & \cdots & P(1d_B|xy) \\
                             \vdots & \ddots & \vdots \\
                             P(d_A1|xy) & \cdots & P(d_Ad_B|xy) \\
                           \end{array}
                         \right)\in\mathbb{R}^{d_A\times d_B}.$$
Normalization imposes that the entries in each block add up to one while no-signaling that the sum of the elements in the same row (column) for blocks in the same row (column) is equal. We will furthermore consider different matrix norms. Following the Schatten p-norm notation, $||\cdot||_1$ will be the trace norm (i.\ e.\ the sum of all singular values) while $||\cdot||_\infty$ the spectral norm (i.\ e.\ the maximal singular value). We are now in the position to state our main result.

\begin{theorem} In every $(m_Am_Bd_Ad_B)$ scenario, if $\P\in\q$ then $||P||_1\leq\sqrt{m_Am_B}$.
\end{theorem}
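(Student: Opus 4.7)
The plan is to exploit a Gram-matrix factorisation of $P$. Assuming without loss of generality a pure state (a mixed $\rho_{AB}$ can always be purified on a larger space, the original projectors being extended by identity without changing the probabilities), we have $P(ab|xy) = \langle\psi|E_a^x\otimes F_b^y|\psi\rangle$. Grouping the operators by party, I would introduce
\begin{equation*}
|u_{xa}\rangle := (E_a^x \otimes \one)|\psi\rangle, \qquad |v_{yb}\rangle := (\one \otimes F_b^y)|\psi\rangle,
\end{equation*}
so that $P(ab|xy) = \langle u_{xa}|v_{yb}\rangle$. Collecting the $|u_{xa}\rangle$ (resp.\ $|v_{yb}\rangle$) as the $n_A$ (resp.\ $n_B$) columns of a matrix $U$ (resp.\ $V$), this is exactly the statement that
\begin{equation*}
P = U^\dagger V.
\end{equation*}

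The key inequality is then the Schatten--H\"older bound with $p=q=2$, i.e.\ $||AB||_1 \leq ||A||_2\, ||B||_2$, where $||\cdot||_2$ denotes the Frobenius norm. Applied to the above factorisation it gives
\begin{equation*}
||P||_1 \;\leq\; ||U||_2\, ||V||_2 .
\end{equation*}

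The two Frobenius norms can be evaluated in closed form using projectivity, $(E_a^x)^\dagger E_a^x = E_a^x$, and the completeness relation $\sum_a E_a^x = \one$:
\begin{equation*}
||U||_2^2 = \sum_{x,a}\langle\psi|E_a^x\otimes\one|\psi\rangle = \sum_{x}\langle\psi|\psi\rangle = m_A ,
\end{equation*}
and analogously $||V||_2^2 = m_B$, yielding the advertised bound $||P||_1 \leq \sqrt{m_A m_B}$.

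I do not anticipate any substantive obstacle: the only non-routine step is recognising that for any quantum behaviour the matrix $P$ is a Gram matrix of vectors naturally associated to the measurement operators acting on (a purification of) the state, which is in essence the content of the first level of the NPA hierarchy alluded to in the introduction; after that the argument reduces to elementary matrix analysis. The projectivity assumption is used in a single place, when computing $||U||_2^2$ and $||V||_2^2$, and covers POVM realisations via Naimark dilation without changing the probabilities.
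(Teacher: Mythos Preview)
Your proof is correct and is genuinely different from the paper's. The paper proceeds indirectly: it first proves the dual inequality $\langle P,G\rangle\le\|G\|_\infty\sqrt{m_Am_B}$ by embedding the problem into the first NPA level $Q^1$, writing the maximisation of $\tr(\Gamma W)$ as a primal SDP, and exhibiting a feasible dual vector (feasibility checked via a Schur complement) that attains the value $\|G\|_\infty\sqrt{m_Am_B}$; Theorem~1 is then recovered by optimising $G$ over isometries. Your argument bypasses all of this: the Gram factorisation $P=U^\dagger V$ together with the Schatten--H\"older inequality $\|U^\dagger V\|_1\le\|U\|_2\|V\|_2$ and the one-line computation $\|U\|_2^2=m_A$, $\|V\|_2^2=m_B$ gives the trace-norm bound directly, with no SDP duality and no explicit reference to the NPA hierarchy. (The paper in fact acknowledges in a footnote that such a direct route exists.) What the paper's approach buys is the extra information that the bound already holds for every behaviour in $Q^1\supsetneq\q$, which is used later to gauge how tight the condition can possibly be; your route is more elementary but, as stated, only certifies the bound on $\q$ itself. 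Note that your result also immediately yields the paper's inequality~(\ref{ineq1}) via the trace-norm/spectral-norm duality $\tr(PG^T)\le\|P\|_1\|G\|_\infty$, so nothing is lost for the downstream applications.
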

We will actually prove the following stronger result: for every $\P\in\q$ and $G\in\mathbb{R}^{n_A\times n_B}$ it must hold that
\begin{align}
\langle P,G\rangle&=\tr(PG^T)=\sum_{abxy}P(ab|xy)G(ab|xy)\nonumber\\
&\leq||G||_\infty\sqrt{m_Am_B}.\label{ineq1}
\end{align}
The relevance of this result will be discussed later on. For the moment, let us simply point out that Theorem 1 follows from it by noticing that $||P||_1=\max_O\tr(PO)$ where the maximization is over all isometries \cite{HJ2}. 
\begin{proof}
In order to prove that inequality (\ref{ineq1}) is true, we will show that the inequality holds $\forall\P\in Q^1$, i.\ e.\ the first step of the NPA hierarchy (notice that $\q\subset Q^1$). It is worth mentioning that Theorem 1 can be proved without invoking this \cite{referee}. However, we have chosen to present this proof because it furthermore clarifies the relative strength of the condition. Behaviours in $Q^1$ must fulfill \cite{NPA} that a $(n_A+n_B)\times(n_A+n_B)$ real positive semidefinite matrix $\Gamma=\left(\begin{array}{cc}
Q & P \\
P^T & R \\
\end{array}\right)$ exists with $Q_{ii}=P_A(i)$ and $R_{jj}=P_B(j)$, where $P_A$ ($P_B$) is a vector of $\mathbb{R}^{m_Ad_A}$ ($\mathbb{R}^{m_Bd_B}$) with entries given by $P(a|x)$ ($P(b|y)$) with $xa$ ($by$) in lexicographical order \footnote{Other entries of $Q$ and $R$ have to be fixed to zero. See \cite{NPA} for details.}. Thus, defining $W=\left(
                               \begin{array}{cc}
                                 0 & G \\
                                 G^T & 0 \\
                               \end{array}
                             \right)$,
for any given $G$ the maximum value of $\langle P,G\rangle$ attainable in $\q$ cannot be larger than
\begin{align}
&\max_{\Gamma}\tr(\Gamma W)/2, \textrm{ subject to}\nonumber\\
&\Gamma\geq0,\nonumber\\
&\tr(D^A_i\Gamma)=P_A(i) \quad(i=1,\ldots,n_A),\nonumber\\
&\tr(D^B_j\Gamma)=P_B(j)\quad(j=1,\ldots,n_B),\nonumber
\end{align}
where $D^A_i=E_i\oplus 0_{n_B}$ ($D^B_j=0_{n_A}\oplus\tilde{E}_j$) and $E_i$ ($\tilde{E}_j$) a $n_A\times n_A$ ($n_B\times n_B$) matrix whose only nonzero entry is the $ii$ ($jj$) with value 1. This is the primal form of a semi-definite program (SDP) \cite{SDP} with cost function $p(\Gamma)$ and we will denote its solution by $p(\Gamma^*)$. The dual form of this SDP corresponds to
$$\min_x x^T\left(
              \begin{array}{c}
                P_A \\
                P_B \\
              \end{array}
            \right), \textrm{ subject to } \phantom{x}diag(x)-W/2\geq0,$$
where $x$ is a $n_A+n_B$ real vector yielding the value $d(x)$. By duality, for any feasible $x$ (i.\ e.\ satisfying the constraint above), it must hold that $p(\Gamma^*)\leq d(x)$. Thus, to finish the proof it suffices to construct a feasible $x$ yielding the value $d(x)=||G||_\infty\sqrt{m_Am_B}$. This is the case for $x=||G||_\infty/2(\sqrt{m_B/m_A}\vec{1}_{n_A},\sqrt{m_A/m_B}\vec{1}_{n_B})$ where $\vec{1}_{n}$ is the $n$-dimensional vector with all entries equal to one. To see that it is feasible amounts to checking that
$$\left(
    \begin{array}{cc}
      \sqrt{m_B/m_A}||G||_\infty\one_{n_A} & -G \\
      -G^T & \sqrt{m_A/m_B}||G||_\infty\one_{n_B} \\
    \end{array}
  \right)\geq0.$$
This is indeed true because, since the upper left corner in strictly positive definite, by Schur's complement condition \cite{HJ1} this is equivalent to
$$\sqrt{m_A/m_B}\left(||G||_\infty\one_{n_B}-\frac{G^TG}{||G||_\infty}\right)\geq0,$$
which is obviously true given that the maximal eigenvalue of $G^TG$ is precisely $||G||_\infty^2$.
\end{proof}

One of the appealing properties of Theorem 1 and inequality (\ref{ineq1}) is that they have a very compact form. However,the reasoning used in this proof can be applied to obtain other stronger but more complicated conditions \footnote{As we will see later this line of thought can also be applied in the correlation picture.}.
Let us denote by $M$ the matrix constructed using the same prescription as $P$ in Eq.\ (\ref{Pmatrix}) but with entries given by
\begin{equation}\label{Mmatrix}
M(ab|xy)=P(ab|xy)-P(a|x)P(b|y).
\end{equation}
We will now prove that for any choice of matrix $G\in\mathbb{R}^{n_A\times n_B}$,
\begin{align}
\langle M,G\rangle\leq&||G||_\infty\sqrt{m_Am_B}-\frac{||G||_\infty}{2}\sqrt{\frac{m_B}{m_A}}\sum_{ax}
P(a|x)^2\nonumber\\
&-\frac{||G||_\infty}{2}\sqrt{\frac{m_A}{m_B}}\sum_{by}P(b|y)^2 \label{ineq2}
\end{align}
should hold $\forall \P\in\q$ and, hence
\begin{theorem} In every $(m_Am_Bd_Ad_B)$ scenario, if $\P\in\q$ then
$$||M||_1\leq\sqrt{m_Am_B}\left(1-\sum_{ax}
\frac{P(a|x)^2}{2m_A}-\sum_{by}\frac{P(b|y)^2}{2m_B}\right).$$
\end{theorem}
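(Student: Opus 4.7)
The plan is to mimic the SDP-duality argument of Theorem~1, modifying the moment matrix so that $M$ (rather than $P$) appears as its off-diagonal block. The crucial step is to include the identity operator in the basis of the first NPA level, which upgrades the PSD condition used in Theorem~1 to
\[
\Gamma^+=\begin{pmatrix}1 & P_A^T & P_B^T\\ P_A & Q & P\\ P_B & P^T & R\end{pmatrix}\geq 0.
\]
Applying Schur complement on the $1\times 1$ top-left block rewrites this condition as
\[
\tilde\Gamma:=\begin{pmatrix}Q-P_AP_A^T & P-P_AP_B^T\\ P^T-P_BP_A^T & R-P_BP_B^T\end{pmatrix}\geq 0.
\]
The off-diagonal block of $\tilde\Gamma$ is exactly $M$ by (\ref{Mmatrix}), and its diagonal entries are $P_A(i)(1-P_A(i))$ and $P_B(j)(1-P_B(j))$, since $Q_{ii}=P_A(i)$ and $R_{jj}=P_B(j)$.

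With $\tilde\Gamma$ in hand, an SDP upper-bounding $\langle M,G\rangle$ over $\q$ is obtained from the SDP in Theorem~1's proof by the substitutions $P\mapsto M$ in the objective and $P_A(i)\mapsto P_A(i)(1-P_A(i))$, $P_B(j)\mapsto P_B(j)(1-P_B(j))$ in the equality constraints; any remaining off-diagonal orthogonality constraints may be dropped, since this only relaxes the primal and therefore preserves the resulting upper bound. The dual feasibility condition $\mathrm{diag}(x)-W/2\geq 0$ is identical to that in Theorem~1, so the very same feasible witness $x=(||G||_\infty/2)(\sqrt{m_B/m_A}\,\vec{1}_{n_A},\sqrt{m_A/m_B}\,\vec{1}_{n_B})$ applies and yields the dual value
\begin{align*}
d(x)&=\frac{||G||_\infty}{2}\sqrt{\frac{m_B}{m_A}}\sum_{ax}P(a|x)(1-P(a|x))\\
&\quad+\frac{||G||_\infty}{2}\sqrt{\frac{m_A}{m_B}}\sum_{by}P(b|y)(1-P(b|y)).
\end{align*}
Expanding with $\sum_{ax}P(a|x)=m_A$ and $\sum_{by}P(b|y)=m_B$ reproduces the right-hand side of (\ref{ineq2}); the $||M||_1$ bound in Theorem~2 then follows by maximizing both sides over $G$ with $||G||_\infty\leq 1$, exactly as Theorem~1 was deduced from (\ref{ineq1}).

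The only real point of novelty, and hence the main obstacle, is the opening observation that adjoining the identity to the NPA basis and Schur-complementing accomplishes two things at once: it installs the fluctuation $M$ in place of $P$ on the off-diagonal block while simultaneously shrinking the diagonal data from $P_A(i)$ to $P_A(i)(1-P_A(i))$ (and analogously for $B$). Once this matching is spotted, the rest of the proof is a line-by-line repetition of Theorem~1's duality computation using the identical dual witness, with the cost on the dual side decorating the bound with precisely the quadratic corrections in the marginals appearing in Theorem~2.
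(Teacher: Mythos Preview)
Your proposal is correct and follows essentially the same route as the paper. The paper states directly (citing \cite{ml}) that $Q^1$ is equivalent to $\Gamma-(P_A;P_B)(P_A^T\ P_B^T)\geq0$, which is exactly the Schur complement you derive from the enlarged moment matrix $\Gamma^+$; thereafter both arguments run the identical SDP relaxation and plug in the same dual witness $x$.
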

\begin{proof}
This proof goes along similar lines as the previous one, so we will only sketch the details.
The set $Q^1$ is actually equivalent to the positive semidefiniteness of $\Gamma-\left(
\begin{array}{c}
P_A \\
P_B \\
\end{array}                                                                                         \right)(P_A^T P_B^T)$ \cite{ml}, that is, $$\tilde{\Gamma}=\left(\begin{array}{cc}
\tilde{Q} & M \\
M^T & \tilde{R} \\
\end{array}\right)\geq0,$$where now $\tilde{Q}_{ii}=P_A(i)-P_A(i)^2$ and $\tilde{R}_{jj}=P_B(j)-P_B(j)^2$. Thus, the value of $\langle M,G\rangle$ in $\q$ cannot be larger than the primal SDP
\begin{align}
&\max_{\tilde{\Gamma}}\tr(\tilde{\Gamma} W)/2, \textrm{ subject to}\nonumber\\
&\tilde{\Gamma}\geq0,\nonumber\\
&\tr(D^A_i\tilde{\Gamma})=P_A(i)-P_A(i)^2 \quad(i=1,\ldots,n_A),\nonumber\\
&\tr(D^B_j\tilde{\Gamma})=P_B(j)-P_B(j)^2\quad(j=1,\ldots,n_B),\nonumber
\end{align}
with dual
$$\min_x x^T\left(
              \begin{array}{c}
                P_A-P_A^2 \\
                P_B-P_B^2 \\
              \end{array}
            \right), \textrm{ subject to } \phantom{x}diag(x)-W/2\geq0,$$
where the vectors $P_{A,B}^2$ have entries $P_A(i)^2$ and $P_B(j)^2$. The same choice of $x$ as in the previous proof does the job.
\end{proof}

Let us finish this section analyzing the strength of Theorems 1 and 2 by considering a few examples. As mentioned before, our conditions are deduced from the definition of the first step of the NPA hierarchy, $Q^1$. Obviously then, the best we can expect from them is to separate this set from its complement in $\ns$. Therefore, since $\q\subsetneq Q^1$, it is clear that there exist supra-quantum behaviours satisfying the conditions of Theorems 1 and 2 (i.\ e.\ they are necessary for a behaviour to be in $\q$ but not sufficient). 
We have performed numerical explorations in the (2222) scenario that show that the gap between $Q^1$ and the behaviours satisfying our conditions is small. Theorem 2 only provides a slightly better approximation of $Q^1$ than Theorem 1. A more detailed example can be found in Figure 1. To my knowledge, the only previous instance of analytical means to constrain $\q$ is given by the results of \cite{anal} (which emerge from $Q^1$ as well). However, the application of these techniques is not completely straightforward as they rely on some choice of data processing. Still, the inequality emerging from this approach reproduces $Q^1$ in the (2222) scenario. Notwithstanding, Figure 1 shows that our conditions give a much tighter restriction already in the (3322) example considered in \cite{anal}. This is also apparent in the (2233) case, where the results presented in \cite{anal} fail to completely reproduce $Q^1$ for the isotropic behaviours obtained by mixtures of a fully random box and the PR box $P_{PR}(2,3)$ (see Eq.\ (\ref{pr2d}) below). On the contrary, Theorems 1 and 2 are tight in this case. Thus, it is interesting to point out that while inequalities (\ref{ineq1}) and (\ref{ineq2}) depend on a proper choice of $G$ for each case, these examples indicate that Theorems 1 and 2, which are straightforward to apply in general, still provide remarkably strong conditions. To show further the usefulness of these results, in the remaining sections we provide several applications of them.

\begin{figure}[t]
\includegraphics[scale=0.4]{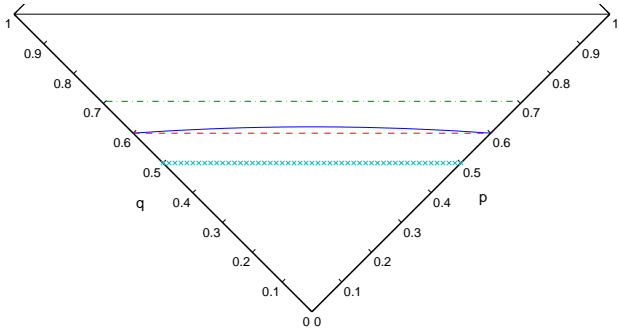}
\caption{A slice of the (3322) polytope spanned by behaviours $P=qP_1+pP_2+(1-p-q)P_n$, where $P_1$ and $P_2$ are two PR boxes and $P_n$ is the fully mixed behaviour. Horizontal lines achieve the same value for the $I_{3322}$ Bell inequality (see \cite{anal} for details). Crosses: The boundary of $\l$ and $\q$ (equal in this slice, $I_{3322}=0$). Dashed line: The boundary of $Q^1$ ($I_{3322}=0.2$). Solid line: The conditions of Theorems 1 and 2 (they are equivalent in this case, $0.2\leq I_{3322}\lesssim0.24$). Dashed-dotted line: The inequality of \cite{anal} with the sign binning data processing ($I_{3322}\simeq0.41$). A more clever data processing for this case only manages to lower this condition to $I_{3322}\simeq0.3$ \cite{anal}.}
\end{figure}

\section{Nontriviality of the conditions and nonquantumness of extremal no-signaling behaviours}\label{nontsec}

The previous examples already give a good idea of the strength of the conditions derived here. To analyze in full generality their non-triviality, notice that, due to the convexity of the trace norm, $||P||_1$ must attain its maximum value in $\ns$ at the vertices of the polytope. Hence, the ideal situation would be that the LDBs achieve the maximal possible value ($\sqrt{m_Am_B}$) and that all PR boxes violate this bound. Actually, it has been only recently shown in \cite{ravi} (see also \cite{lo}) that all PR boxes (including the multipartite case) are not in $\q$. It was left as an open question there whether there exists a separation between them and $\q$. The interest of this question relies on the fact that PR boxes are the most advantageous resources in DIQIP applications such as cryptography \cite{ravi}. Thus, the closer $\q$ can be to these behaviours the more efficient these applications can be. In the following we show that all LDBs attain the $\sqrt{m_Am_B}$ bound and that all PR boxes in $(22d_Ad_B)$ and $(mm22)$ scenarios violate the bound. Thus, besides showing that the bound is not improvable and in general not trivial, we further provide a simple proof in these cases of the result of \cite{ravi}. Moreover, we show that there actually exists a quantitative separation between these PR boxes and $\q$, answering in these scenarios the question raised therein.

\begin{proposition}
In every $(m_Am_Bd_Ad_B)$ scenario, if $\P\in\l$ then $||P||_1\leq\sqrt{m_Am_B}$, with equality for LDBs.
\end{proposition}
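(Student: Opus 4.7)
The plan is to exploit the convex-polytope structure of $\l$ together with the subadditivity of the trace norm, and then to handle the extremal points (the LDBs) by direct inspection, where the matrix $P$ turns out to be rank one.

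First, I would observe that since $\l$ is the convex hull of the LDBs, any $\P\in\l$ can be written as $\P=\sum_i q_i D_i$ with $q_i\geq0$ and $\sum_i q_i=1$. By linearity of the map defined in Eq.\ (\ref{Pmatrix}), the associated matrix satisfies $P=\sum_i q_i P_{D_i}$, where $P_{D_i}$ is the matrix of the $i$-th LDB. The triangle inequality for the trace norm then gives $\|P\|_1\leq\sum_i q_i\|P_{D_i}\|_1$, so the bound on general local behaviours follows immediately from the bound on LDBs. Hence the task reduces to showing that $\|P_{D_i}\|_1=\sqrt{m_Am_B}$ for each LDB, which simultaneously establishes the equality claim.

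Second, I would explicitly compute $P_{D_i}$ for a generic LDB $D_i(ab|xy)=\delta_{a,f_i(x)}\delta_{b,g_i(y)}$. Substituting into Eq.\ (\ref{Pmatrix}) gives
\begin{equation}
P_{D_i}=\sum_{xy}\ket{x,f_i(x)}\bra{y,g_i(y)}=\ket{u_i}\bra{v_i},
\end{equation}
with $\ket{u_i}=\sum_x\ket{x,f_i(x)}\in\mathbb{R}^{n_A}$ and $\ket{v_i}=\sum_y\ket{y,g_i(y)}\in\mathbb{R}^{n_B}$. Since the vectors $\{\ket{x,f_i(x)}\}_x$ are distinct elements of the computational basis of $\mathbb{R}^{n_A}$ (and similarly for $\ket{v_i}$), one has $\|\ket{u_i}\|_2=\sqrt{m_A}$ and $\|\ket{v_i}\|_2=\sqrt{m_B}$. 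The matrix $P_{D_i}$ is therefore rank one, with single nonzero singular value $\|\ket{u_i}\|_2\|\ket{v_i}\|_2=\sqrt{m_Am_B}$, so that $\|P_{D_i}\|_1=\sqrt{m_Am_B}$ as required.

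There is essentially no real obstacle in this proof; the only care needed is in making sure that the two vectors $\ket{u_i},\ket{v_i}$ really are sums of distinct basis vectors (which follows from the fact that the kets $\ket{x,a}$ indexing the rows of $P$ are pairwise orthogonal for different $x$), so that their Euclidean norms are exactly $\sqrt{m_A}$ and $\sqrt{m_B}$. Combining the two steps yields $\|P\|_1\leq\sqrt{m_Am_B}$ for every $\P\in\l$, with equality attained on every LDB.
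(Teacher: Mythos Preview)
Your proof is correct and essentially identical to the paper's: both recognize that the matrix of an LDB is the rank-one outer product of two vectors of Euclidean norms $\sqrt{m_A}$ and $\sqrt{m_B}$, and then invoke convexity of the trace norm for general $\P\in\l$. The only cosmetic difference is that the paper writes the vectors as $\sum_{ax}d_i^A(a|x)\ket{ax}$ rather than $\sum_x\ket{x,f_i(x)}$.
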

\begin{proof}
Every LDB is of the form $D_i(ab|xy)=d_i^A(a|x)d_i^B(b|y)$ where the lists $d_i^A(a|x)$ ($d_i^B(b|y)$) have $m_A$ ($m_B$) entries equal to 1 and 0 otherwise. Hence,
\begin{align}
||D_i||_1&=\left|\left|\left(\sum_{ax}d_i^A(a|x)|ax\rangle\right)\left(\sum_{by}d_i^B(b|y)\langle by|\right)\right|\right|_1\nonumber\\
&=\left|\left|\sum_{ax}d_i^A(a|x)|ax\rangle\right|\right|_2\left|\left|\sum_{by}d_i^B(b|y)|by\rangle\right|\right|_2\nonumber\\
&=\sqrt{m_Am_B}.
\end{align}
Notice that the fact that $||P||_1\leq\sqrt{m_Am_B}$ holds $\forall\P\in\l$ follows then by the convexity of the trace norm without the need of invoking Theorem 1.
\end{proof}
Evidently, that the inequality is fulfilled in $\l$ was already obvious from Theorem 1. The important observation here is that all LDBs attain the bound, hence showing that it cannot be improved.

We analyze now the values $||P||_1$ might take for PR boxes. Let us consider first the $(22d_Ad_B)$ scenario. Taking $A_d$ to be the circulant $d\times d$ matrix
\begin{equation}\label{Ad}
A_d=\left(
                                 \begin{array}{cccc}
                                   0 & 0 & \ldots & 1 \\
                                   1 & \ddots & \ddots & \vdots \\
                                   \begin{array}{c}
                                     0 \\
                                     \vdots
                                   \end{array}
                                    & \ddots & \ddots & 0 \\
                                   0 & \ldots0 & 1 & 0 \\
                                 \end{array}
                               \right),
\end{equation}
the PR boxes in this case are given by \footnote{We have removed extra rows or columns of zeros since they are irrelevant for the value of the trace norm.}
\begin{equation}\label{pr2d}
P_{PR}(2,d)=\frac{1}{d}\left(
              \begin{array}{cc}
                \one_d & \one_d \\
                \one_d & A_d \\
              \end{array}
            \right), \quad2\leq d\leq\min(d_A,d_B),
\end{equation}
up to relabelings of the inputs and the outputs \cite{barrett}. Since these transformations amount to certain permutations of the rows or columns of $P$ that leave the trace norm invariant, it suffices to compute it for the matrix given in Eq.\ (\ref{pr2d}). Using the pinching inequality \cite{bhatia,gohberg}, we obtain that
$$||P_{PR}(2,d)||_1\geq\frac{1}{d}\left\|\left(
              \begin{array}{cc}
                \one_d & 0_d \\
                0_d & A_d \\
              \end{array}
            \right)\right\|_1=2.$$
The conditions under which equality is attained in the pinching inequality are given in Theorem 8.7 of \cite{gohberg} and it is easily checked that they are not met in this case. Hence, we obtain that $||P_{PR}(2,d)||_1>2$, which amounts to the non-quantumness of these PR boxes by Theorem 1. Moreover, by a more refined use of the pinching inequality, we obtain the following stronger result, which shows the existence of a finite gap between these PR boxes and $\q$.

\begin{theorem}
In every $(22d_Ad_B)$ scenario, it holds $\forall \P\in\q$ that $\forall d$ $||P_{PR}(2,d)-P||_1\geq||P_{PR}(2,d)||_1-||P||_1\geq\sqrt{5}-2$.
\end{theorem}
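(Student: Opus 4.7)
My plan is as follows. The first inequality is just the reverse triangle inequality, valid for any norm. Theorem 1 applied in the $(22d_Ad_B)$ scenario gives $||P||_1 \leq \sqrt{m_Am_B} = 2$ for every $\P \in \q$, so the task reduces to establishing the dimension-independent bound $||P_{PR}(2,d)||_1 \geq \sqrt{5}$ for every admissible $d$.

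To prove this lower bound I would apply a sharper pinching than the coarse top/bottom one used just above the statement (which only yielded $||P_{PR}(2,d)||_1 \geq 2$). The key structural fact is that, since $A_d$ is a cyclic permutation without fixed points, $(A_d)_{ii}=0$ for all $i$, whereas the three other ``corner blocks'' of $P_{PR}(2,d)$ all have entry $1/d$ at position $(i,i)$. I would therefore pair each top index $i$ with the bottom index $d+i$ via the rank-2 orthogonal projectors $\Pi_i=\ket{e_i}\bra{e_i}+\ket{e_{d+i}}\bra{e_{d+i}}$, $i=1,\ldots,d$, which are mutually orthogonal and sum to the identity on $\mathbb{R}^{2d}$. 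The pinching inequality \cite{bhatia,gohberg} then gives
$$||P_{PR}(2,d)||_1 \geq \Big|\Big|\sum_{i=1}^d \Pi_i\, P_{PR}(2,d)\, \Pi_i\Big|\Big|_1,$$
and by construction the pinched matrix is block diagonal across the $d$ disjoint index pairs $\{i,d+i\}$, each block equal to $\frac{1}{d}\left(\begin{smallmatrix}1&1\\1&0\end{smallmatrix}\right)$. A direct computation shows that $\left(\begin{smallmatrix}1&1\\1&0\end{smallmatrix}\right)$ has trace norm $\sqrt{5}$ (its squared singular values are $(3\pm\sqrt{5})/2$), and additivity of the trace norm over direct-sum blocks then produces $d\cdot\sqrt{5}/d=\sqrt{5}$, as desired.

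The main obstacle I expect is precisely identifying this rank-2 pinching: the naive top/bottom bipartition ignores the off-diagonal blocks that couple the two halves and only recovers $2$, whereas the choice above is tailored to isolate the $\left(\begin{smallmatrix}1&1\\1&0\end{smallmatrix}\right)$-subpattern responsible for the supra-quantum character of the PR box and for the appearance of $\sqrt{5}$. Once this pinching is chosen, the remaining verifications (mutual orthogonality of the $\Pi_i$, identification of each $2\times 2$ block, and the singular-value calculation) are elementary matrix analysis.
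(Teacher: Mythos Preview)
Your proof is correct and is essentially the same argument as the paper's: the paper first applies the row/column permutation $|xa\rangle\langle yb|\mapsto|ax\rangle\langle by|$ so that the diagonal $2\times2$ blocks of the reordered matrix $P'_{PR}(2,d)$ are all $Z=\left(\begin{smallmatrix}1&1\\1&0\end{smallmatrix}\right)/d$, and then pinches onto those blocks to get $d\cdot||Z||_1=\sqrt5$. Your rank-2 projectors $\Pi_i=\proj{e_i}+\proj{e_{d+i}}$ pick out precisely the same $2\times2$ submatrices (the entries $P(ii|xy)$ for $x,y\in\{1,2\}$), so the two pinchings coincide up to the explicit permutation step.
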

\begin{proof}
We need to show that $||P_{PR}(2,d)||_1\geq\sqrt{5}$ $\forall d$. By permutation matrices, that leave the trace norm invariant, we can map $P=\sum_{abxy}P(ab|xy)|xa\rangle\langle yb|$ to $P'=\sum_{abxy}P(ab|xy)|ax\rangle\langle by|$, which has now blocks given by
$$P'_{ab}=\left(
                           \begin{array}{ccc}
                             P(ab|11) & \cdots & P(ab|1m_B) \\
                             \vdots & \ddots & \vdots \\
                             P(ab|m_A1) & \cdots & P(ab|m_Am_B) \\
                           \end{array}
                         \right)\in\mathbb{R}^{m_A\times m_B}.$$
We therefore have that
\begin{align}
||P_{PR}(2,d)||_1&=||P'_{PR}(2,d)||_1\nonumber\\
&=\frac{1}{d}\left|\left|\left(
                                 \begin{array}{cccc}
                                   Z & 0_2 & \ldots & Y \\
                                   Y & \ddots & \ddots & \vdots \\
                                   \begin{array}{c}
                                     0_2 \\
                                     \vdots
                                   \end{array}
                                    & \ddots & \ddots & 0_2 \\
                                   0_2 & \ldots0_2 & Y & Z \\
                                 \end{array}
                               \right)\right|\right|_1,
\end{align}
where
$$Z=\left(
      \begin{array}{cc}
        1 & 1 \\
        1 & 0 \\
      \end{array}
    \right)
,\quad Y=\left(
              \begin{array}{cc}
                0 & 0 \\
                0 & 1 \\
              \end{array}
            \right).$$
Hence, using now the pinching inequality we obtain the desired result $||P_{PR}(2,d)||_1\geq(d||Z||_1)/d=\sqrt{5}$.
\end{proof}
It might be interesting to note that $||P_{PR}(2,d)||_1\leq2\sqrt{2}$. To see this, notice that for the Frobenius norm $||P_{PR}(2,d)||_2=\sqrt{\tr(P_{PR}(2,d)P_{PR}(2,d)^T)}=2/\sqrt{d}$ and that $||X||_1\leq\sqrt{n}||X||_2$ for any $n\times n$ matrix $X$. Numerics suggest that the above estimates can be improved to $1+\sqrt{2}=||P_{PR}(2,2)||_1\leq||P_{PR}(2,d)||_1<\lim_{d\to\infty}||P_{PR}(2,d)||_1\simeq2.55$, that would change the bound in Theorem 4 to $\sqrt{2}-1$.

Let us move now to the $(mm22)$ scenario. The corresponding PR boxes have all been determined in \cite{JM} (see Table II therein). We denote an arbitrary one of them by $P_{PR}(m,2)$. One sees that (up to relabelings) these matrices always have the following structure: they have a $4\times4$ block in the diagonal given by $P_{PR}(2,2)$ followed by $m-2$ $2\times2$ blocks in the diagonal, which are either $\one_2/2$, $A_2/2$ or $diag(1,0)$. Since the latter blocks all have unit trace norm, it follows again by the pinching inequality that $||P_{PR}(m,2)||_1\geq||P_{PR}(2,2)||_1+m-2=m+\sqrt{2}-1$. This shows again by virtue of Theorem 1 that there is a quantitative separation between the PR boxes in these scenarios and $\q$:
\begin{theorem}
In every $(mm22)$ scenario, it holds $\forall \P\in\q$ that $||P_{PR}(m,2)-P||_1\geq\sqrt{2}-1$.
\end{theorem}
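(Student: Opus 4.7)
The plan is to combine Theorem~1 with a sharp lower bound on $||P_{PR}(m,2)||_1$ via the reverse triangle inequality. Since $\P\in\q$ in the $(mm22)$ scenario forces $||P||_1\leq\sqrt{m_Am_B}=m$, one has
\[
||P_{PR}(m,2)-P||_1\geq||P_{PR}(m,2)||_1-||P||_1\geq||P_{PR}(m,2)||_1-m,
\]
so the whole theorem reduces to proving $||P_{PR}(m,2)||_1\geq m+\sqrt{2}-1$ uniformly in the choice of PR box in this family.

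To control $||P_{PR}(m,2)||_1$ I would mimic the strategy used for Theorem~4 and exploit the complete classification of $(mm22)$ PR boxes given in \cite{JM}. That table asserts that, up to relabelings of inputs and outputs (which act as row/column permutations on $P$ and leave $||\cdot||_1$ invariant), every $P_{PR}(m,2)$ can be brought to block-diagonal form consisting of a single $4\times 4$ block equal to $P_{PR}(2,2)$ followed by $m-2$ diagonal $2\times 2$ blocks, each of which is one of $\one_2/2$, $A_2/2$, or $\mathrm{diag}(1,0)$. The key observation is that each of these three $2\times 2$ possibilities has trace norm exactly $1$, as is immediate from inspection of their singular values.

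With this structural fact in hand, the pinching inequality \cite{bhatia,gohberg} applied to the block-diagonal form immediately yields
\[
||P_{PR}(m,2)||_1\geq||P_{PR}(2,2)||_1+(m-2),
\]
so only $||P_{PR}(2,2)||_1$ remains to be evaluated. A direct SVD of the explicit $4\times 4$ matrix $P_{PR}(2,2)$ obtained from Eq.~(\ref{pr2d}) (equivalently, diagonalizing $P_{PR}(2,2)P_{PR}(2,2)^T$) gives $||P_{PR}(2,2)||_1=1+\sqrt{2}$. Hence $||P_{PR}(m,2)||_1\geq m+\sqrt{2}-1$ and, combined with the Theorem~1 bound on $||P||_1$, the claim follows.

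The only non-routine step is extracting the claimed block-diagonal structure from the case-by-case classification in \cite{JM}, together with the elementary verification that each admissible $2\times 2$ block has unit trace norm. Once that structural input is granted, the remainder of the argument is a mechanical chain: reverse triangle inequality, pinching, a $4\times 4$ singular-value computation, and Theorem~1.
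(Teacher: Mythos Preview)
Your proposal is correct and follows essentially the same approach as the paper: reverse triangle inequality plus Theorem~1 to reduce to $||P_{PR}(m,2)||_1\geq m+\sqrt{2}-1$, then the structural classification from \cite{JM}, the pinching inequality, and the explicit value $||P_{PR}(2,2)||_1=1+\sqrt{2}$. One minor wording point: the paper only asserts that the stated blocks appear \emph{on the diagonal} (the off-diagonal blocks need not vanish), which is precisely why pinching is invoked rather than a direct block-sum equality; your phrase ``brought to block-diagonal form'' slightly overstates this, but the argument goes through unchanged.
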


It might be interesting to mention as well that in the $(mm22)$ scenario for the fully nondeterministic boxes it holds that $||P_{PR}(m,2)||_1\leq m\sqrt{m}$. For these behaviours there are $2m^2$ non-vanishing entries with value $1/2$ \cite{JM}. Hence, similarly as before, $||P_{PR}(m,2)||_2=m/\sqrt{2}$, obtaining the above estimate.

\section{Tsirelson bounds and bipartite maximally entangled states}\label{tsisec}

One can see that the left-hand-side of inequality (\ref{ineq1}) defines an arbitrary Bell expression, i.\ e.\ any linear combination of the elements $P(ab|xy)$. Since $\l$, $\q$ and $\ns$ are compact convex sets, there always exist such expressions separating them, i.\ e.\ $\langle P,G\rangle\leq G_\l,G_\q,G_\ns$ depending on whether $\P\in\l,\q,\ns$ with $G_\l\leq G_\q\leq G_\ns$. The most characteristic one is the CHSH inequality (see below) in the $(2222)$ scenario for which $G_\l=2$ \cite{chsh}, $G_\q=2\sqrt{2}$ \cite{tsirelson} and $G_\ns=4$ \cite{PR}. While to determine the optimal value of $G_\l$ and $G_\ns$ it suffices to check over the corresponding vertices, to determine the optimal value of $G_\q$, known as Tsirelson bounds, is a less straightforward task \footnote{Still, determining $G_\l$ and $G_\ns$ can be computationally hard as the number of vertices increases exponentially with the number of inputs.}.
However, this is very relevant to identify optimal DIQIP performances in the context of quantum games \cite{review,review2}. Thus, a remarkable feature of inequality (\ref{ineq1}) is that it provides a systematic way to construct quantum upper bounds to arbitrary Bell inequalities. Actually, our result resembles that of \cite{epping,linden}, but the latter only holds for the particular class of Bell inequalities based on correlators. Later on, we will discuss this relation further. To give a hint of the usefulness of inequality (\ref{ineq1}), we will show now that it allows to obtain Tsirelson's bound for the CHSH inequality. This can be expressed by a matrix $G_{CHSH}$ with blocks
$$G_{11}=G_{12}=G_{21}=-G_{22}=\left(
\begin{array}{rr}
1 & -1 \\
-1 & 1 \\
\end{array}\right).$$
It turns out that $||G_{CHSH}||_\infty=2$ and, hence, we obtain the trivial bound 4. Nevertheless, given that behaviours in $\ns$ must fulfill several different constraints, equivalent Bell inequalities can be expressed up to rescaling and addition of an offset. Thus, if we take ($I$ is a matrix with all entries equal to 1)
$$G'_{CHSH}=\frac{1}{2}\left(
                         \begin{array}{cc}
                           I_2 & 0_2 \\
                           0_2 & I_2 \\
                         \end{array}
                       \right)
+\frac{1}{2\sqrt{2}}G_{CHSH},$$
it holds that $\langle P,G'_{CHSH}\rangle=1+\langle P,G_{CHSH}\rangle/(2\sqrt{2})$ $\forall \P\in\ns$. Since $||G'_{CHSH}||_\infty=1$, we have then that $\forall \P\in\q$, $\langle P,G'_{CHSH}\rangle\leq2$ and, hence, $\langle P,G_{CHSH}\rangle\leq2\sqrt{2}$.

Since the CHSH Tsirelson bound is achievable by a quantum behaviour arising from certain measurements on a maximally entangled two-qubit state and $G'_{CHSH}$ is orthogonal, this also shows that for this behaviour $||P||_1=2$, achieving the bound of Theorem 1. It is a natural question to ask which other behaviours in $\q\backslash\l$ can attain it. We computed $||P||_1$ for the quantum behaviours yielding the largest known value for several two-outcome Bell inequalities given in \cite{palvertesi} but, in general, the bound of Theorem 1 is not attained. Interestingly, when this occurs, the behaviour arises from a maximally entangled state of qubits. This seems to extend for scenarios with more outcomes. In particular, in (2233) the quantum behaviour maximally violating the CGLMP inequality \cite{cglmp} was given in \cite{acin,NPA}. However, for it we find that $||P||_1\simeq1.98$, while the maximal value $||P||_1=2$ is attained for the behaviour that yields the maximal CGLMP value when restricted to a maximally entangled two-qutrit state \cite{cglmp,acin}. This leads to consider whether for every bipartite maximally entangled state of dimension $d$,
\begin{equation}\label{maxent}
|\Phi^+_d\rangle=\frac{1}{\sqrt{d}}\sum_{j=0}^{d-1}|jj\rangle,
\end{equation}
there exists measurements such that the corresponding behaviour attains the bound of Theorem 1. In the following we show that this is indeed the case $\forall d$. We will use the construction of \cite{cglmp} that provides measurements on a $(22dd)$ scenario (i.\ e.\ $x,y=1,2$ and $a,b=1,\ldots,d$) such that acting on $|\Phi^+_d\rangle$ lead to the behaviour
\begin{equation}\label{maxentbeh}
P_d(ab|xy)=\left(2d^3\sin^2\left[\pi(a-b+\alpha(x)+\beta(y))/d\right]\right)^{-1},
\end{equation}
where $\alpha(1)=0$, $\alpha(2)=1/2$, $\beta(1)=1/4$ and $\beta(2)=-1/4$. This construction is enough to show that for a maximally entangled state of any dimension, there exist a behaviour for which the bound of Theorem 1 can be saturated. However, it should be stressed that other behaviours arising from maximally entangled states can have this property as well.
\begin{theorem}
For any maximally entangled state $|\Phi^+_d\rangle$, the corresponding $(22dd)$ behaviour given in Eq.\ (\ref{maxentbeh}) attains the bound of Theorem 1: $||P_d||_1=2$ $\forall d$.
\end{theorem}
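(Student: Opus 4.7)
The plan is to diagonalise the block structure of $P_d$ via the discrete Fourier transform on each $d$-dimensional side. Because $P_d(ab|xy)$ depends on $a,b$ only through $a-b$ modulo $d$, each of the four $d\times d$ blocks $P_{xy}$ of $P_d$ is circulant. All four blocks are then simultaneously diagonalised by the unitary congruence $\mathcal{F}:=I_2\otimes F$, with $F$ the $d$-dimensional DFT, and a subsequent row/column permutation turns $\mathcal{F}^* P_d\mathcal{F}$ into a block-diagonal matrix whose $d$ blocks are
\[
M_k=\begin{pmatrix}\lambda_k^{11} & \lambda_k^{12}\\ \lambda_k^{21} & \lambda_k^{22}\end{pmatrix},\qquad k=0,\ldots,d-1,
\]
with $\lambda_k^{xy}$ the $k$-th eigenvalue of $P_{xy}$. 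Since $\mathcal{F}$ is unitary, $\|P_d\|_1=\sum_{k=0}^{d-1}\|M_k\|_1$, and the task reduces to $d$ trace norms of $2\times 2$ matrices.

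To get a workable expression for $\lambda_k^{xy}$, I would apply the Mittag-Leffler identity $\csc^2(\pi y)=\pi^{-2}\sum_{n\in\mathbb{Z}}(y+n)^{-2}$, which turns the finite DFT sum into an infinite sum over all integers:
\[
\lambda_k^{xy}=\frac{T_k(\theta_{xy})}{2d\pi^2},\qquad T_k(\theta):=\sum_{p\in\mathbb{Z}}\frac{e^{-2\pi ipk/d}}{(p+\theta)^2},
\]
with $\theta_{xy}:=\alpha(x)+\beta(y)\in\{1/4,-1/4,3/4\}$. A closed form for $T_k(\theta)$ can be obtained by integrating the distributional ODE $\partial_\eta f+2\pi i\theta f=2\pi i\delta(\eta)$ satisfied by $f(\eta,\theta):=\sum_p e^{2\pi ip\eta}/(p+\theta)$, yielding $f(\eta,\theta)=2\pi i e^{-2\pi i\theta\eta}/(1-e^{-2\pi i\theta})$ for $\eta\in(0,1)$, and then $T_k(\theta)=-\partial_\theta f|_{\eta=(d-k)/d}$ for $k\neq 0$. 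The key structural inputs are two elementary identities, $T_k(-\theta)=\overline{T_k(\theta)}$ (substitute $p\to-p$) and $T_k(\theta+1)=e^{2\pi ik/d}T_k(\theta)$ (substitute $p\to p-1$), which tie together the four values $\lambda_k^{xy}$.

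The $k=0$ block is immediate: since all four $\theta_{xy}$ satisfy $\sin^2(\pi\theta_{xy})=1/2$, we have $\lambda_0^{xy}=1/d$ uniformly, so $M_0=d^{-1}\mathbf{1}\mathbf{1}^T$ and $\|M_0\|_1=2/d$. For $k\neq 0$, the two symmetries force
\[
M_k=\begin{pmatrix}\alpha_k & \overline{\alpha_k}\\ e^{2\pi ik/d}\overline{\alpha_k} & \alpha_k\end{pmatrix},\qquad \alpha_k:=\lambda_k^{11},
\]
and a direct computation gives $\|M_k\|_F^2=4|\alpha_k|^2$ and $|\det M_k|=2|\alpha_k|^2|\sin\phi_k|$ with $\phi_k:=2\arg\alpha_k-\pi k/d$. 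The $2\times 2$ identity $\|M\|_1^2=\|M\|_F^2+2|\det M|$ then gives $\|M_k\|_1=2|\alpha_k|\sqrt{1+|\sin\phi_k|}$.

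The final step is to substitute the closed form for $\alpha_k$. After writing $T_k(1/4)=(2\pi^2/d)e^{-i\pi(d-k)/(2d)}\bigl(k+i(d-k)\bigr)$, one reads off $|\alpha_k|^2=(k^2+(d-k)^2)/d^4$ and, after some trigonometry, $|\sin\phi_k|=2k(d-k)/(k^2+(d-k)^2)$. The product telescopes:
\[
|\alpha_k|^2\bigl(1+|\sin\phi_k|\bigr)=\frac{k^2+2k(d-k)+(d-k)^2}{d^4}=\frac{1}{d^2},
\]
so $\|M_k\|_1=2/d$ for every $k$ and $\|P_d\|_1=d\cdot 2/d=2$. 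The main obstacle is the explicit evaluation of $T_k(\theta)$ together with verifying this last cancellation; that it works out to a constant independent of $k$ reflects the fact that the CGLMP measurements on $|\Phi^+_d\rangle$ are tuned so that all $d$ Fourier modes contribute equally to the trace norm.
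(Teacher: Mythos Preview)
Your argument is correct and shares the paper's core reduction: both proofs use the circulant block structure of $P_d$ to block-diagonalise via the DFT $I_2\otimes F$, reducing $\|P_d\|_1$ to the sum of $d$ trace norms of $2\times 2$ blocks $M_k$ whose entries are the circulant eigenvalues $\lambda_k^{xy}$. The differences lie in the endgame. The paper first proves that $P_d$ is normal (via $B^\dagger B=C^\dagger C$), so that singular values equal $|\text{eigenvalues}|$, and then solves the quadratic $\lambda_j^\pm(P_d)=\lambda_j(A)\pm\sqrt{\lambda_j(B)\lambda_j(C)}$ to obtain the individual singular values $\sigma_j^-=2j/d^2$, $\sigma_j^+=2(d-j)/d^2$. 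You bypass normality entirely by invoking the $2\times 2$ identity $\|M\|_1^2=\|M\|_F^2+2|\det M|$, and you organise the block $M_k$ through the two symmetries $T_k(-\theta)=\overline{T_k(\theta)}$ and $T_k(\theta+1)=e^{2\pi ik/d}T_k(\theta)$ rather than computing all entries from scratch. For the explicit evaluation of the circulant eigenvalues, the paper's appendix uses a direct geometric-series manipulation of the finite sum, while you unfold it via the Mittag--Leffler identity $\csc^2(\pi y)=\pi^{-2}\sum_n(y+n)^{-2}$ and then solve a first-order ODE for the periodic Fourier series $f(\eta,\theta)$; both routes yield the same closed form. Your approach is slightly more structural and avoids the normality detour, but gives only the block sums $\|M_k\|_1=2/d$; the paper's route additionally exhibits the full singular-value spectrum of $P_d$.
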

\begin{proof}
In order to obtain $||P_d||_1$ we will compute its $2d$ singular values, which we will denote by $\{\sigma_j^+,\sigma_j^-\}_{j=0}^{d-1}$ for reasons that will become clear later. To that aim we will first show that $P_d$ is normal (i.\ e.\ $P_d^\dag P_d=P_dP_d^\dag$) implying that the singular values correspond to the absolute values of the eigenvalues. By fixing the inputs $x,y$, the matrix of the behaviour (\ref{maxentbeh}) can be partitioned as
\begin{equation}\label{pd}
P_d=\left(
      \begin{array}{cc}
        P_{11} & P_{12} \\
        P_{21} & P_{22} \\
      \end{array}
    \right),
\end{equation}
where
\begin{align}
P(ab|11)&=P(ab|22)=\left(2d^3\sin^2\left[\pi(a-b+1/4)/d\right]\right)^{-1},\nonumber\\
P(ab|12)&=\left(2d^3\sin^2\left[\pi(a-b-1/4)/d\right]\right)^{-1},\nonumber\\
P(ab|12)&=\left(2d^3\sin^2\left[\pi(a-b+3/4)/d\right]\right)^{-1}.
\end{align}
Thus, to ease the notation we will rewrite Eq.\ (\ref{pd}) as
\begin{equation}\label{pd2}
P_d=\left(
      \begin{array}{cc}
        A & B \\
        C & A \\
      \end{array}
    \right).
\end{equation}
Notice that the $d\times d$ matrices $A,B,C$ are circulant, as for every fixed $x,y$ it holds that
\begin{equation}
P_d(ab|xy)=P_d(a+1,b+1|xy)
\end{equation}
where it should be understood here that $d+1=1$. We will use the following properties of $d\times d$ circulant matrices \cite{circ}: they all have the same eigenvectors $(1,\omega_j,\omega_j^2,\ldots,\omega_j^{d-1})^T/\sqrt{d}$ corresponding to eigenvalues
\begin{equation}\label{eign}
\lambda_j(P_{xy})=\sum_{b=1}^d P(1b|xy)\omega_j^{b-1},\quad j=0,1,\ldots,d-1,
\end{equation}
where $\omega_j=\exp(2\pi ij/d)$ are the $d$-th roots of unity. This particularly implies that all circulant matrices are normal and commute with each other. Hence, it is easy to check that $P_d$ is normal if $B^\dag B=C^\dag C$. To see that this is indeed the case, consider that Eq.\ (\ref{eign}) tells us that the eigenvalues of our matrices are given by
\begin{align}
\lambda_j(A)&=\frac{1}{2d^3}\sum_{k=0}^{d-1}\frac{\omega_j^k}{\sin^2\left(\frac{\pi k}{d}-\frac{\pi}{4d}\right)},\nonumber\\
\lambda_j(B)&=\frac{1}{2d^3}\sum_{k=0}^{d-1}\frac{\omega_j^k}{\sin^2\left(\frac{\pi k}{d}+\frac{\pi}{4d}\right)},\nonumber\\
\lambda_j(C)&=\frac{1}{2d^3}\sum_{k=0}^{d-1}\frac{\omega_j^k}{\sin^2\left(\frac{\pi k}{d}-\frac{3\pi}{4d}\right)}.
\end{align}
These summation formulas are computed in the Appendix, obtaining
 \begin{align}
\lambda_j(A)&=-\frac{ie^{i\frac{\pi j}{2d}}}{d^2}(j+i(d-j)),\nonumber\\
\lambda_j(B)&=\frac{ie^{-i\frac{\pi j}{2d}}}{d^2}(j-i(d-j)),\nonumber\\
\lambda_j(C)&=\frac{ie^{i\frac{3\pi j}{2d}}}{d^2}(j-i(d-j)).\label{eigs}
\end{align}
Notice then that $|\lambda_j(B)|=|\lambda_j(C)|$ corresponding to the same eigenvector, which in addition to the fact that these matrices are normal, implies indeed that $B^\dag B=C^\dag C$. Hence, $P_d$ is normal and, therefore, its singular values are the absolute value of its eigenvalues. We compute now then the latter. Since all circulant matrices are diagonalized by the same unitary $U$ (i.\ e.\ $X=UD_XU^\dag$ with $D$ diagonal for every circulant matrix $X$), it holds then that the matrix
\begin{equation}
D=\left(
  \begin{array}{cc}
    D_A & D_B \\
    D_C & D_A \\
  \end{array}
\right)=\left(
  \begin{array}{cc}
    U^\dag & 0 \\
    0 & U^\dag \\
  \end{array}
\right)P_d\left(
  \begin{array}{cc}
    U & 0 \\
    0 & U \\
  \end{array}
\right)
\end{equation}
has the same eigenvalues as $P_d$. Using now the Schur complement condition that tells us that
\begin{equation}
\det(D)=\det(D_A)\det(D_A-D_BD_A^{-1}D_C),
\end{equation}
we have that
\begin{equation}
\lambda_j(A)-\lambda_j(P_d)-\frac{\lambda_j(B)\lambda_j(C)}{\lambda_j(A)-\lambda_j(P_d)}=0,
\end{equation}
and, hence,
\begin{equation}
\lambda_j^\pm(P_d)=\lambda_j(A)\pm\sqrt{\lambda_j(B)\lambda_j(C)}.
\end{equation}
Thus, using Eqs. (\ref{eigs}), we obtain that
\begin{equation}
\sigma_j^\pm(P_d)=|\lambda_j^\pm(P_d)|=\frac{|-(j+i(d-j))\pm(j-i(d-j))|}{d^2},
\end{equation}
which means that
\begin{equation}
\sigma_j^-(P_d)=\frac{2j}{d^2},\quad\sigma_j^+(P_d)=\frac{2(d-j)}{d^2}.
\end{equation}
Hence, we finally obtain that
\begin{equation}
||P_d||_1=\sum_{j=0}^{d-1}(\sigma_j^+(P_d)+\sigma_j^-(P_d))=2.
\end{equation}
\end{proof}

This result is interesting because it shows that the bound of Theorem 1 is attainable in $\q\backslash\l$ and that behaviours arising from maximally entangled states are extremal in this sense. Notwithstanding, it has some further application. If a real square matrix $P$ has singular value decomposition given by
\begin{equation}
P=\sum_i\sigma_iu_iv_i^T=U\Sigma V^T,
\end{equation}
then
\begin{equation}
||P||_1=\tr\Sigma=\tr(PO)
\end{equation}
with $O=VU^T$ orthogonal (as so are $U$ and $V$). Thus, by choosing $G=O^T$, we can always construct a Bell expression such that for any given $\P$ it holds that $\langle P,G\rangle=||P||_1$. Remarkably, if we happen to have a quantum behaviour such that $||P||_1=\sqrt{m_Am_B}$ (i.\ e.\ it saturates the bound of Theorem 1), then the aforementioned prescription immediately yields a Bell expression which is maximized in $\q$ by $P$. This is because $||G||_\infty=1$ and, hence, by inequality (\ref{ineq1}), there cannot exist any other quantum behaviour $\textbf{R}$ such that $\langle R,G\rangle>\sqrt{m_Am_B}$. Thus, this allows to construct Bell inequalities which are maximally violated in $\q$ by different behaviours of interest. Theorem 6 shows that this is possible for maximally entangled states of any dimension \cite{sos}. For example, following this prescription for the behaviour (\ref{maxentbeh}) with $d=3$ leads to a Bell expression $G(\Phi^+_3)$ which is then maximized in $\q$ by $|\Phi^+_3\rangle$ and is given by the following coefficient matrix \footnote{In order to have a symmetrical Bell expression we have used the reduced singular value decomposition of $P$. Hence, $U$ and $V$ are not orthogonal (they are not square) and nor is $G$. Still, it holds that $G(\Phi^+_3)_\q=2$ as $||G||_\infty=1$.}
\begin{equation}
G=\left(
                  \begin{array}{cccccc}
                    \frac{2+\sqrt{3}}{6} & \frac{2-\sqrt{3}}{6} & -\frac{1}{6} & \frac{2+\sqrt{3}}{6} & -\frac{1}{6} & \frac{2-\sqrt{3}}{6} \\
                    -\frac{1}{6} & \frac{2+\sqrt{3}}{6} & \frac{2-\sqrt{3}}{6} & \frac{2-\sqrt{3}}{6} & \frac{2+\sqrt{3}}{6} & -\frac{1}{6} \\
                    \frac{2-\sqrt{3}}{6} & -\frac{1}{6} & \frac{2+\sqrt{3}}{6} & -\frac{1}{6} & \frac{2-\sqrt{3}}{6} & \frac{2+\sqrt{3}}{6} \\
                    \frac{2+\sqrt{3}}{6} & -\frac{1}{6} & \frac{2-\sqrt{3}}{6} & -\frac{1}{6} & \frac{2+\sqrt{3}}{6} & \frac{2-\sqrt{3}}{6} \\
                    \frac{2-\sqrt{3}}{6} & \frac{2+\sqrt{3}}{6} & -\frac{1}{6} & \frac{2-\sqrt{3}}{6} & -\frac{1}{6} & \frac{2+\sqrt{3}}{6} \\
                    -\frac{1}{6} & \frac{2-\sqrt{3}}{6} & \frac{2+\sqrt{3}}{6} & \frac{2+\sqrt{3}}{6} & \frac{2-\sqrt{3}}{6} & -\frac{1}{6} \\
                  \end{array}
                \right)^T.
\end{equation}
Notice that this Bell inequality separates $\l$ from $\q$ as it is straightforward to find that the maximal value of $G(\Phi^+_3)$ under $\l$ is $(3\sqrt{3}+5)/6\simeq1.70$. A different example of a Bell inequality maximally violated by $|\Phi^+_3\rangle$ can be found in \cite{liang}.

Thus, Theorem 6 also shows that that for a maximally entangled state of any dimension there always exists a Bell inequality that is maximally violated in $\q$ by it and how to construct it \cite{son}. In this sense, one can then devise a DIQIP protocol for which maximally entangled states are optimal within $\q$. This might also be of relevance in the context of self-testing \cite{yao} if it turned out that the behaviour (\ref{maxentbeh}) is the only one maximizing these Bell expressions in $\q$. Self-testing arises when a certain behaviour is the unique to attain a particular Bell value. 
 This allows to check the performance of a quantum set-up without trusting any of the devices, particularly when it can be made robust \cite{rob,robm}
. Using the techniques of \cite{robm} with the Bell inequality $G(\Phi^+_3)$ and its generalizations for other dimensions, it could be possible to check whether robust self-testing of maximally entangled states is possible in this way. 

Notice, moreover, that this observation above that allows to construct a Bell expression such that $\langle P,G\rangle=||P||_1$ can be used in other contexts. For instance, the results of Theorems 4 and 5 imply that for every PR box considered there we can write down constructively a Bell expression, and hence a potential DIQIP protocol, whose performance has a quantitative gap with any quantum behaviour.

\section{Correlation scenarios}\label{corrsec}

If we restrict ourselves ourselves to two-outcome scenarios ($d_A=d_B=2$) and taking $a,b\in\{-1,1\}$, all behaviours in $\ns$ can be alternatively characterized by the correlators
\begin{equation}
\langle A_xB_y\rangle=\sum_{ab} abP(ab|xy)
\end{equation}
and the marginal expectations
\begin{equation}
\langle A_x\rangle=\sum_aaP(a|x), \quad\langle B_y\rangle=\sum_bbP(b|y).
\end{equation}
As mentioned before, it has been shown in \cite{epping} that for the particular class of correlator Bell inequalities,
\begin{equation}\label{eqepping}
\sum_{xy}G_{xy}\langle A_xB_y\rangle\leq||G||_\infty\sqrt{m_Am_B}
\end{equation}
must hold $\forall \P\in\q$ and every real $m_A\times m_B$ matrix $G$. We will denote by $C$ the $m_A\times m_B$ matrix with entries $C_{xy}=\langle A_xB_y\rangle$. It can be shown that the result above can also be proved using similar techniques as in Theorems 1 and 2 \cite{wehner}. For that, one just needs to consider that behaviours in $Q^1$ must fulfill \cite{NPA} that a $(m_A+m_B)\times(m_A+m_B)$ real positive semidefinite matrix $\left(\begin{array}{cc}
\hat{Q} & C \\
C^T & \hat{R} \\
\end{array}\right)$ exists with $\hat{Q}_{xx}=1$ and $\hat{R}_{yy}=1$ and proceed as in the proofs of Theorems 1 and 2 to upper bound $\tr(CG^T)$. This does not only provide an alternative proof of inequality (\ref{eqepping}) but it also shows that this bound cannot give stronger constraints than $Q^1$. Moreover, as in Theorems 1 and 2 this leads to the following condition
\begin{equation}\label{ineqc1}
||C||_1\leq\sqrt{m_Am_B}\quad\forall \P\in\q.
\end{equation}
However, it turns out that this condition is strictly weaker than Theorem 1 (i.\ e.\ every behaviour detected as non-quantum by the above inequality is also non-quantum by Theorem 1) as we show in the following.
\begin{proposition}
For every two-outcome behaviour $\P\in\ns$, if $||C||_1>\sqrt{m_Am_B}$, then $||P||_1>\sqrt{m_Am_B}$.
\end{proposition}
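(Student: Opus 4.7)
The plan is to exhibit a trace-norm-preserving change of basis that makes the correlator matrix $C$ appear as a diagonal block of $P$, and then apply the pinching inequality to extract a lower bound on $\|P\|_1$ in terms of $\|C\|_1$.

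First, I would use the fact that each $2\times 2$ block $P_{xy}$ of $P$ encodes the four joint probabilities, while a $\pm 1$ relabelling of outcomes rewrites these probabilities as $\{1,\langle A_x\rangle,\langle B_y\rangle,C_{xy}\}$. Concretely, let $H=\tfrac{1}{\sqrt{2}}\bigl(\begin{smallmatrix}1&1\\1&-1\end{smallmatrix}\bigr)$, which is orthogonal. Then $(I_{m_A}\otimes H^T)P(I_{m_B}\otimes H)$ replaces each $P_{xy}$ by $H^T P_{xy} H$, and a direct computation using $\sum_{ab}P(ab|xy)=1$ together with the definitions of $\langle A_x\rangle$, $\langle B_y\rangle$ and $C_{xy}$ yields
\begin{equation}
H^T P_{xy} H = \frac{1}{2}\begin{pmatrix} 1 & \langle B_y\rangle \\ \langle A_x\rangle & C_{xy}\end{pmatrix}.
\end{equation}
Because $I\otimes H$ is orthogonal, this operation preserves $\|\cdot\|_1$.

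Next, I would rearrange the resulting matrix by the index-swap permutation $|xa\rangle\langle yb|\mapsto|ax\rangle\langle by|$ so that the outer block structure is indexed by the outputs $a,b\in\{1,2\}$ rather than by the inputs $x,y$. Permutations are orthogonal, so trace norm is again preserved. After this relabelling, the transformed matrix becomes the $2\times 2$ block matrix (with $m_A\times m_B$ blocks)
\begin{equation}
\frac{1}{2}\begin{pmatrix} \mathbf{1}_{m_A}\mathbf{1}_{m_B}^T & \mathbf{1}_{m_A}B^T\\ A\,\mathbf{1}_{m_B}^T & C\end{pmatrix},
\end{equation}
where $A,B$ are the marginal-expectation vectors. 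Applying the pinching inequality to this $2\times 2$ block decomposition gives
\begin{equation}
\|P\|_1 \;\geq\; \tfrac{1}{2}\|\mathbf{1}_{m_A}\mathbf{1}_{m_B}^T\|_1 + \tfrac{1}{2}\|C\|_1 = \tfrac{1}{2}\sqrt{m_A m_B}+\tfrac{1}{2}\|C\|_1,
\end{equation}
using that $\mathbf{1}_{m_A}\mathbf{1}_{m_B}^T$ is rank one with trace norm $\sqrt{m_A m_B}$. The hypothesis $\|C\|_1>\sqrt{m_A m_B}$ then immediately yields $\|P\|_1>\sqrt{m_A m_B}$, as desired.

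The essentially all of the content sits in finding the right orthogonal rotation on the output indices: the Hadamard $H$ is tailored so that normalization, marginals and correlators all appear on distinct entries of the $2\times 2$ blocks, which is what lets the permutation-plus-pinching step produce a clean separation of the constant contribution $\sqrt{m_A m_B}$ from the correlator contribution $\|C\|_1$. The remaining obstacle is purely bookkeeping: verifying that the change of basis really does preserve $\|\cdot\|_1$ (which follows from orthogonality of $I\otimes H$ and of the index-swap permutation) and that the numerical factor $\tfrac12$ lines up so that the threshold on the right-hand side is exactly $\sqrt{m_A m_B}$ rather than something weaker.
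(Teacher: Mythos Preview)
Your proof is correct and follows essentially the same route as the paper: the paper first performs the index swap $P\mapsto P'$ and then invokes Corollary~3 of Bani-Domi and Kittaneh to obtain $\|P\|_1\geq\tfrac12\bigl(\|\sum_{ab}P'_{ab}\|_1+\|P'_{11}-P'_{12}-P'_{21}+P'_{22}\|_1\bigr)=\tfrac12(\sqrt{m_Am_B}+\|C\|_1)$, which is precisely your Hadamard-conjugation-plus-pinching argument (that corollary \emph{is} conjugation by $H\otimes I$ on the $2\times2$ block structure followed by pinching). The only difference is presentational---your version is self-contained, whereas the paper outsources the Hadamard step to the cited reference.
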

\begin{proof}
Using again the mapping from $P$ to $P'$ as in the proof of Theorem 4, we obtain that
\begin{align}
||P||_1&=\left|\left|\left(
            \begin{array}{cc}
              P'_{11} & P'_{12} \\
              P'_{21} & P'_{22} \\
            \end{array}
          \right)\right|\right|_1\nonumber\\
          &\geq\frac{1}{2}(||\sum_{ab}P'_{ab}||_1+||P'_{11}-P'_{12}-P'_{21}+P'_{22}||_1)\nonumber\\
          &=\frac{1}{2}\left[\left|\left|\left(
                                    \begin{array}{ccc}
                                      1 & \cdots & 1 \\
                                      \vdots & \ddots & \vdots \\
                                      1 & \cdots & 1 \\
                                    \end{array}
                                  \right)\right|\right|_1+||C||_1\right]\nonumber\\
                                  &=\frac{\sqrt{m_Am_B}+||C||_1}{2},
\end{align}
where the inequality comes from Corollary 3 in \cite{bani}.
\end{proof}

The above result agrees with intuition since with $C$ we are not fully characterizing $\P$. This suggests to consider on the analogy of Theorem 2 a condition including the marginals $\langle A_x\rangle$ and $\langle B_y\rangle$. Indeed, using similar arguments based on the first step of the NPA hierarchy one can show that, for every matrix $G\in\mathbb{R}^{m_B\times m_B}$, it must hold for every quantum behaviour that
\begin{align}
&\sum_{xy}G_{xy}(\langle A_xB_y\rangle-\langle A_x\rangle\langle B_y\rangle)\leq||G||_\infty\sqrt{m_Am_B}\nonumber\\ &-\frac{||G||_\infty}{2}\sqrt{\frac{m_B}{m_A}}\sum_{x}\langle A_x\rangle^2-\frac{||G||_\infty}{2}\sqrt{\frac{m_A}{m_B}}\sum_{y}\langle B_y\rangle^2.\label{ineqc2}
\end{align}
This particularly implies the following.
\begin{theorem}
For every two-outcome behaviour $\P\in\q$ it holds that
\begin{equation*}
||C'||_1\leq\sqrt{m_Am_B}\left(1-\sum_{x}
\frac{\langle A_x\rangle^2}{2m_A}-\sum_{y}\frac{\langle B_y\rangle^2}{2m_B}\right),
\end{equation*}
where $C'$ has entries $C'_{xy}=\langle A_xB_y\rangle-\langle A_x\rangle\langle B_y\rangle$.
\end{theorem}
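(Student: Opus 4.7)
The plan is to derive Theorem 7 in two steps: first prove the more general inequality (\ref{ineqc2}) by an SDP primal--dual argument on the correlator version of $Q^1$, and then invoke the dual characterization of the trace norm to obtain the stated bound on $||C'||_1$. This parallels the logic by which Theorem 1 followed from (\ref{ineq1}) and Theorem 2 from (\ref{ineq2}) in the probability picture.

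For (\ref{ineqc2}), I would start from the correlator-picture characterization of $Q^1$ stated just before (\ref{eqepping}): every quantum behaviour yields a PSD $(m_A+m_B)\times(m_A+m_B)$ matrix with off-diagonal block $C$ and unit diagonal. Exactly as in the passage from $\Gamma$ to $\tilde\Gamma$ in Theorem 2, I would then invoke the reformulation of \cite{ml}---subtracting the rank-one outer product of the marginals vector $v=(\langle A_1\rangle,\ldots,\langle A_{m_A}\rangle,\langle B_1\rangle,\ldots,\langle B_{m_B}\rangle)^T$ from that matrix---to obtain an equivalent PSD condition for a matrix whose off-diagonal block is now $C'$ and whose diagonal entries are $1-\langle A_x\rangle^2$ and $1-\langle B_y\rangle^2$. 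With this PSD certificate in hand, the SDP duality argument of Theorem 2 applies essentially verbatim, only in dimension $m_A+m_B$ instead of $n_A+n_B$: the primal maximizes $\tr(\hat\Gamma' W)/2$ with $W$ having off-diagonal block $G$ and diagonal constraints fixed by the new diagonal data, and the same dual vector $x=(||G||_\infty/2)(\sqrt{m_B/m_A}\,\vec{1}_{m_A},\sqrt{m_A/m_B}\,\vec{1}_{m_B})$---feasible because Schur's complement reduces its constraint to $||G||_\infty^2\,\one-G^TG\geq 0$---evaluated on the new diagonal values gives exactly the right-hand side of (\ref{ineqc2}).

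For the second step, I would invoke the trace-norm duality $||C'||_1=\max\{\tr(C'G^T):G\in\mathbb{R}^{m_A\times m_B},\ ||G||_\infty\leq 1\}$, the analogue of the identity used to pass from (\ref{ineq1}) to Theorem 1. Plugging a maximizer $G^\star$ (which necessarily has $||G^\star||_\infty=1$) into (\ref{ineqc2}) turns its left-hand side into $||C'||_1$ while its right-hand side becomes $\sqrt{m_Am_B}-\tfrac{1}{2}\sqrt{m_B/m_A}\sum_x\langle A_x\rangle^2-\tfrac{1}{2}\sqrt{m_A/m_B}\sum_y\langle B_y\rangle^2$. Factoring $\sqrt{m_Am_B}$ out of this expression produces exactly the bound of Theorem 7.

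The only genuine subtlety is justifying the subtraction-of-outer-product reduction in the correlator setting, i.e.\ verifying that membership in the correlator version of $Q^1$ is equivalent to positivity of the shifted matrix with diagonal entries $1-\langle A_x\rangle^2$ and $1-\langle B_y\rangle^2$. Once this PSD certificate is granted, both the SDP primal--dual computation and the final trace-norm duality step are direct transcriptions of arguments already carried out in the proofs of Theorems 1 and 2 and require no new ideas.
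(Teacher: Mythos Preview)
Your proposal is correct and follows essentially the same two-step architecture as the paper: establish inequality (\ref{ineqc2}) via SDP duality on a shifted correlator certificate, then pass to the trace-norm bound by the dual characterization of $||\cdot||_1$. The feasible dual vector and the Schur-complement feasibility check you write down are exactly those used in the paper.

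The one point where the paper differs from your outline is precisely the step you flag as the ``only genuine subtlety.'' You propose to obtain the shifted PSD certificate (off-diagonal block $C'$, diagonal entries $1-\langle A_x\rangle^2$ and $1-\langle B_y\rangle^2$) by invoking the \cite{ml} reformulation, i.e.\ subtracting $vv^T$ from the $(m_A+m_B)$ correlator matrix. The paper instead makes this step self-contained: it starts from the \emph{full} correlator $Q^1$ certificate, which is the $(1+m_A+m_B)\times(1+m_A+m_B)$ PSD matrix
\[
\begin{pmatrix}
1 & \langle A\rangle & \langle B\rangle\\
\langle A\rangle^T & \hat Q & C\\
\langle B\rangle^T & C^T & \hat R
\end{pmatrix}\geq 0,
\]
and takes the Schur complement with respect to the top-left $1\times1$ block. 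This yields directly the positivity of the shifted $(m_A+m_B)$ matrix with off-diagonal block $C'$ and the stated diagonal, without any appeal to \cite{ml}. In other words, the paper supplies the concrete justification you were missing: the ``subtraction of the outer product'' is nothing but Schur complementing out the identity row/column of the full NPA $Q^1$ moment matrix in the correlator picture. Once you know this, your proof sketch and the paper's proof coincide.
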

\begin{proof}
We have to show that inequality (\ref{ineqc2}) is true for quantum behaviours. This follows from the fact that $Q^1$ is equivalent \cite{NPA} to the positive semi-definiteness of
$$\left(
    \begin{array}{ccc}
      1 & \langle A\rangle & \langle B\rangle \\
      \langle A\rangle^T & \hat{Q} & C \\
      \langle B\rangle^T & C^T & \hat{R} \\
    \end{array}
  \right),$$
where $\langle A\rangle$  ($\langle B\rangle$) is an $m_A$($m_B$)-dimensional vector with entries $\langle A_x\rangle$ ($\langle B_y\rangle$). By Schur's complement condition this leads to
$$\left(\begin{array}{cc}
\hat{Q} & C \\
C^T & \hat{R} \\
\end{array}\right)-\left(
                     \begin{array}{c}
                       \langle A\rangle \\
                       \langle B\rangle \\
                     \end{array}
                   \right)\left(
                            \begin{array}{cc}
                              \langle A\rangle^T & \langle B\rangle^T \\
                            \end{array}
                          \right)=\left(\begin{array}{cc}
\hat{Q'} & C' \\
C'^T & \hat{R'} \\
\end{array}\right)\geq0$$
with $\hat{Q'}_{xx}=1-\langle A_x\rangle^2$ and $\hat{R'}_{yy}=1-\langle B_y\rangle^2$. Proceeding as in the proof of Theorems 1 and 2 we can upper bound $\tr(C'G^T)$ to obtain the desired result.
\end{proof}

\section{Conclusions}

We have shown that the first step of the NPA hierarchy allows to obtain simple analytical conditions constraining the set of quantum behaviours in general bipartite Bell scenarios, whose strength and non-triviality have been illustrated. Since not all problems in quantum nonlocality and DIQIP can be addressed numerically, we expect these conditions to be of utility, filling the hitherto lack of such general tools. In fact, we have applied these conditions here to obtain a variety of results. In Sec.\ \ref{nontsec} we have shown that our bounds are tight and, in general, non-trivial, and we have used them to prove that there exists a finite gap (whose size we have estimated) between the quantum set and PR boxes in several general scenarios answering a question raised in \cite{ravi}. In Sec.\ \ref{tsisec} we have provided a systematic construction of quantum bounds for arbitrary Bell inequalities and we have shown that for a maximally entangled state of any dimension one can obtain a behaviour that attains the bound of Theorem 1. Interestingly, this can be translated into a Bell inequality whose Tsirelson bound is reached by such a state. This could be applied for robust self-testing of maximally entangled states using the techniques developed in \cite{robm}. Finally, in Sec.\ \ref{corrsec} we studied the particular case of correlation scenarios and established some links with the results of \cite{epping}. Several other ideas will be further investigated in the future. Given a Bell inequality, it would be interesting to find a procedure to find the best form of $G$ in (\ref{ineq1}) to obtain its quantum upper bound and when it can be optimal. It is also worth further research to characterize which behaviours in $\q\backslash\l$ attain the bound in Theorem 1: do they only arise from maximally entangled states? It would be also desirable to extend this approach to the multipartite setting. Last, it is worth studying whether stronger analytical conditions as those derived here can be obtained by considering further steps of the NPA hierarchy.

\begin{acknowledgments}
I thank A. Acin and his collaborators for sharing with me a draft of their work on Bell inequalities maximally violated by maximally entangled states \cite{sos}. This research was funded by the Spanish MINECO through
grants MTM 2010-21186-C02-02, MTM2011-26912 and MTM2014-54692 and the CAM regional research consortium QUITEMAD+CM S2013/ICE-2801.
\end{acknowledgments}

\begin{appendix}

\section{Summation formulas}

Here we prove the summation formulas used in the proof of Theorem 6:
\begin{align}
\frac{1}{2d^3}\sum_{k=0}^{d-1}\frac{\omega_j^k}{\sin^2\left(\frac{\pi k}{d}-\frac{\pi}{4d}\right)}&=-\frac{ie^{i\frac{\pi j}{2d}}}{d^2}(j+i(d-j)),\nonumber\\
\frac{1}{2d^3}\sum_{k=0}^{d-1}\frac{\omega_j^k}{\sin^2\left(\frac{\pi k}{d}+\frac{\pi}{4d}\right)}&=\frac{ie^{-i\frac{\pi j}{2d}}}{d^2}(j-i(d-j)),\nonumber\\
\frac{1}{2d^3}\sum_{k=0}^{d-1}\frac{\omega_j^k}{\sin^2\left(\frac{\pi k}{d}-\frac{3\pi}{4d}\right)}&=\frac{ie^{i\frac{3\pi j}{2d}}}{d^2}(j-i(d-j)),
\end{align}
which follow from
\begin{align}
S_\theta=&\sum_{k=0}^{d-1}\frac{\omega_j^k}{\sin^2\left(\pi(k+\theta)/d\right)}\nonumber\\
&=-\frac{4de^{-i2\pi j\theta/d}}{(1-e^{-i2\pi\theta})^2}\left(j+e^{-i2\pi\theta}(d-j)\right),\quad\theta\notin\mathbb{Z}.\label{eqapp}
\end{align}
To verify Eq.\ (\ref{eqapp}), first notice that the geometric sum yields
\begin{equation}
\sum_{n=0}^{d-1}e^{-i2\pi(k+\theta)n/d}=\frac{(1-e^{-i2\pi\theta})e^{i\pi(k+\theta)/d}}{2i\sin(\pi(k+\theta)/d)},
\end{equation}
and, therefore,
\begin{equation}
\sin(\pi(k+\theta)/d)=\frac{2i}{1-e^{-i2\pi\theta}}\sum_{n=0}^{d-1}e^{-i2\pi(k+\theta)(n+1/2)/d}.
\end{equation}
Thus, we find that
\begin{align}
S_\theta&=-\frac{4}{(1-e^{-i2\pi\theta})^2}\nonumber\\
&\times\sum_{m,n=0}^{d-1}e^{-i2\pi\theta(m+n+1)/d}\sum_{k=0}^{d-1}e^{-i2\pi(j-m-n-1)k/d}.
\end{align}
Notice that the inner sum is equal to zero, unless $j-m-n-1=zd$ for any integer $z$ for which the sum is equal to $d$. Given the values the indices take, this can only happen for $z=0$ or $z=-1$, hence obtaining
\begin{align}
S_\theta&=-\frac{4d}{(1-e^{-i2\pi\theta})^2}\nonumber\\
&\times\sum_{m,n=0}^{d-1}e^{-i2\pi\theta(m+n+1)/d}(\delta_{m+n,j-1}+\delta_{m+n,j-1+d})\nonumber\\
&=-\frac{4d}{(1-e^{-i2\pi\theta})^2}\left(je^{-i2\pi j\theta/d}+(d-j)e^{-i2\pi (j+d)\theta/d}\right)
\end{align}
that leads to the desired result.
\end{appendix}

\end{document}